\newtheorem{definition}{Definition}
\newcommand*{\qed}{\hfill\ensuremath{\square}}%
\newtheorem{lemma}{Lemma}
\newtheorem{proof}{Proof}
\DeclareMathOperator{\pr}{Pr}
\newcommand{\etal}{et al.\xspace}
\newcommand{\realrange}[2]{\left[#1, #2\right]}
\newcommand{\unitrange}[2]{\realrange{0}{1}}
\newcommand{\llabel}[1]{\label{\labelprefix:#1}}
\newcommand{\labelprefix}{} % later redefined using renewcommand
\newcommand{\discussionsize}{\small}
\newcommand{\notiz}[1]{}
\newenvironment{code}{\noindent%\sf%
\begin{tabbing}%
\hspace{2em}\=\hspace{2em}\=\hspace{2em}\=\hspace{2em}\=\hspace{2em}\=%
\hspace{2em}\=\hspace{2em}\=\hspace{2em}\=\hspace{2em}\=\hspace{2em}\=%
\kill}{\end{tabbing}}
\newcommand{\labelcommand}{}
\newcommand{\captiontext}{}
\newsavebox{\codeparam}
\newcounter{lineNumber}
\newenvironment{disscodepos}[3]{%
\renewcommand{\labelcommand}{#2}%
\renewcommand{\captiontext}{#3}%
\sbox{\codeparam}{\parbox{\textwidth}{#3}}%
\begin{figure}[#1]\begin{center}\begin{code}\setcounter{lineNumber}{1}}{%
\end{code}\end{center}\caption{\llabel{\labelcommand}\captiontext}\end{figure}}
\newdimen\endofsize\endofsize=0.5em
\def\endofbeweis{~\quad\hglue\hsize minus\hsize
                 \hbox{\vrule height \endofsize width
\endofsize}\par}
\newcommand{\ignore}[1]{}
\title{Doing More for Less\footnote{This work was partially supported by DFG Grant 933.} -- {Cache-Aware Parallel Contraction Hierarchies Preprocessing}}
\author{Dennis Luxen and Dennis Schieferdecker\\Karlsruhe Institute of Technology\\
  Institute for Theoretical Computer Science\\ Karlsruhe, Germany\\
  \texttt{\{luxen,schieferdecker\}@kit.edu} }
\begin{document}

\maketitle

\begin{abstract}
Contraction Hierarchies is a successful speedup-technique to Dijkstra's seminal shortest path algorithm that has a convenient trade-off between preprocessing and query times. We investigate a shared-memory parallel implementation that uses $O(n+m)$ space for storing the graph and $O(1)$ space for each core during preprocessing. 
The presented data structures and algorithms consequently exploits cache locality and thus exhibit competitive preprocessing times. The presented implementation is especially suitable for preprocessing graphs of planet-wide scale in practice.
Also, our experiments show that optimal data structures in the PRAM model can be beaten in practice by exploiting memory cache hierarchies.
\end{abstract}

\section{Introduction and Related Work}\label{sec:related}

Computing point-to-point shortest (or fastest) path queries in a graph has been solved by Dijkstra's seminal algorithm \cite{d-ntpcg-59} since the early times of computer science.
A road network is modelled as a graph $G=(V,E)$ with $\vert V\vert=n$ nodes and $\vert E\vert=m$ edges.
Each edge $e\in E$ is associated with a cost $c(e)$ that is required to traverse that edge.
For the sake of simplicity, consider that nodes are identified by their ID, i.e. $a\in V$ is treated as a number if appropriate.

While the running time of Dijkstra's algorithm is clearly polynomial, its running time does not scale to large instances, e.g. road networks of continental size.
A well-tuned implementation still needs a few seconds for a single shortest path query on such a network even on today's hardware.

Heuristics to prune the search space provide a sense of goal direction \cite{hnr-afbhd-68,gh-cspas-05}.
At one point, the algorithm engineering community picked up the problem and started providing so-called \textit{speedup-techniques} to Dijkstra's algorithm that deliver much better query performance.
An early technique with substantial speedup and optimal (guaranteed) shortest paths is called Arc-Flags \cite{l-spgef-97,mssww-pgsda-06}, where the graph is partitioned into regions and each edge stores a flag for each region if there exists some shortest paths over it into the respective region.
The authors refer the interested reader to \cite{dssw-erpa-09} for a survey on a number of route planning techniques.

One of the optimal speedup-techniques is Contraction Hiearchies (CH) \cite{gssv-erlrn-12}.
CH has a convenient trade-off between preprocessing and query time and exploit the inherent hierarchy of a road network.
CH \textit{shortcut} all nodes of the graph in some order.
Here, shortcutting means that a node is (temporarily) removed from the network and replaced by as few shortcut edges as possible to preserve shortest path distances.
A so-called \textit{witness search}, which is a unidirectional Dijkstra run, is applied to check if a shortcut is actually necessary.
The union of the set of original edges and and the set of shortcut edges form a directed acyclic graph (DAG).
A CH query is essentially a bidirected Dijkstra query and needs only to relax an edge when the target node was contracted after the starting node.
The number of settled nodes during a query is in the order of a few hundred nodes and the query times are about 100 microseconds.
The fastest CH variant is CHASE \cite{bdsssw-chgds-10}, where CH is combined with Arc-Flags and queries run in the order of ten microseconds.

The priority function $p(\cdot)$ by which the nodes are ordered to be contracted is heuristic.
Its purpose is to reflect which nodes are more \textit{important} than others, i.e. the node reflecting a junction with high traffic is said to be more important than the node modelling the end of a dead-end street in a quiet neighborhood.
Usually, it is a simulated witness search that (among other things) inspects the number of edges that would have to be inserted if a certain node would have been removed.
For example, the priority functions of more or less all known CH implementations use a very local search only for the sake of preprocessing efficiency, e.g. \cite{gssv-erlrn-12,bdsv-tdch-09,adgw-ahbla-11,adgw-arrn-10}. Here, local means that only a $k$-neighborhood around a node is considered to determine its priority. 
Batz \etal \cite{bgns-tdcha-10} and Batz and Sanders \cite{bs-tdrpg-12} explore a 16-hop neighborhood, while Kieritz \etal \cite{klsv-dtdch-10}  consider a $5$-hop neighborhood in a distributed memory parallel implementation of CH.
Other implementations \cite{lv-rtros-11}, including the one at hand, prune the search space by setting a fixed limit on the number of settled nodes during the witness search.

The \textit{Parallel Random Access Machine (PRAM)}, e.g. \cite{Karp1990}, is a simple model of parallel computation.
A PRAM has a global (shared memory) and $p$ processors, each equipped with a private local memory.
Each processor can access either shared or global memory in unit time, as well as perform a computation with respect to a memory access.
The cost of accessing memory is uniform for all processors and all accessible memory locations.
The PRAM model is simple and easy to understand but does not reflect the several levels of on-chip cache memory that are present in modern CPUs.
Arge \etal \cite{Arge2008} proposed the \textit{parallel external memory (PEM)} model to capture the memory hierarchies of modern processor architectures.
This model is \textit{cache-aware} and the authors present how to conduct a number of fundamental operations like prefix sums and sorting efficiently.

\textit{Tabulation hashing} is a simple hashing scheme that dates back to as early as the late 1960s when first published by Zobrist \cite{z-anhma-69} and the late 1970s when rediscovered by Carter and Wegmann \cite{cw-u-79}.
It uses simple table lookups and \textit{exclusive or (\texttt{XOR})} operations.
Later Patrascu and Thorup \cite{Patrascu2011} gave a theoretical analysis of the scheme.
Tabulation hashing interprets input keys as a string of $c$ characters $x_1,\ldots,x_c$.
For each of the possible character positions a random table $T_i, 1\leq i \leq c$ is initialized and the following hash function is used:

$$ h(x)=T_1[x_1] \oplus \ldots \oplus T_c[x_c] $$

\subsection{Parallel Preprocessing of Contraction Hierarchies}
Vetter \cite{v-ptdch-09} proposed a parallel preprocessing algorithm that identifies nodes, which can be contracted in parallel.
The method gives good speedups until the memory bandwidth is saturated.
More formally, an \textit{independent node set} is a set of nodes that can be contracted independently of any other remaining nodes in the graph.
Every node that is of lowest priority within a neighborhood of $k$ hops is added to the independent set, where $k$ is tuning parameter.
While $k$ is a tuning parameter, notice that $k=2$ is sufficient.
Kieritz \etal \cite{klsv-dtdch-10} generalize this approach to a distributed memory setting where nodes are contracted on separate compute nodes of a cluster and graph changes are only communicated when necessary.

Since the priority of two nodes may be equal, it is necessary to install a tie-breaking rule.
Note that the actual order in which the nodes of an independent set are contracted can be arbitrary.
Although, Vetter \cite{v-ptdch-09} verbalizes the need for a tie-breaking mechanism, it is not further specified.

\subsection{Our Contribution}
The contribution of this paper is twofold.
First, it shows how well-chosen data structures and algorithms deliver better perfomance than others that are optimal in the PRAM model of computation by exploiting caching effects.
Second, it describes a shared-memory parallel implementation of CH that uses only constant space per CPU core.

The remainder of this paper is structured as follows.
Chapter \ref{sec:tie-break} explains the tie-breaking mechanism more thoroughly and shows basic properties.
Subsequently, a tie-breaker based on tabulation hashing with theoretical perfomance guarantees is developed.
The experimental evaluation shows that it pays off to invest into executing more processor instructions when a large number of cache faults can be avoided.
Building on theoretical performance guarantees, Section \ref{sec:heap-storage} generalizes this hashing technique to build a key-value storage for the priority queue that is used during the witness searches of the CH preprocessing.
Sections \ref{sec:tie-break} and \ref{sec:heap-storage} show the performance in practice while Section \ref{sec:conclusion} gives concluding remarks and identifies future work.
To the best of the authors' knowledge this is the first work that describes a sophisticated shared-memory parallel implementation of CH with an emphasis on cache-awareness.

\section{Tie-Breaking using Tabulation Hashing}\label{sec:tie-break}

As mentioned briefly in the related work section, the role of tie-breaking is to facilitate the decision which node to contract only when neighboring nodes have equal priorities.
A tie-breaking mechanism  cannot be an arbitrary decision process but has to fulfill certain properties as the following paragraph shows.

\begin{definition}[Node Ordering and Tie-Breaking]
Consider two nodes $u\neq v \in V$.
A node $u$ is \textit{smaller} than node $v$ from the $k$-neighborhood if $p(u) < p(v)$ or if $u \prec v$ in case $p(u) = p(v)$, where $\prec$ defines an order on the nodes. The order (or tie-breaker) is called \textit{consistent} iff. $u \prec v = \neg (v\prec u)$.
\end{definition}

One can show that the property of consistency is an essential property of any correct CH implementation.
Consider the contraction of a single node to be a basic operation during the preprocessing.

\begin{lemma}
CH preprocessing with an inconsistent tie-breaker does not terminate for all inputs.
\end{lemma}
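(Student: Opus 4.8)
The plan is to exhibit, for an inconsistent tie-breaker, a configuration in which the contraction process can loop forever by repeatedly selecting the ``same'' pair of nodes in a cyclic fashion, so that no progress is ever made. The key observation is that termination of the preprocessing relies on the fact that in each round a well-defined nonempty set of \emph{locally smallest} nodes is identified and contracted, strictly shrinking the remaining graph. Inconsistency breaks exactly this guarantee.

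First I would unpack what inconsistency means. By the definition, the tie-breaker $\prec$ is consistent iff $u \prec v = \neg(v \prec u)$; so an inconsistent tie-breaker is one for which there exist two distinct nodes $u \neq v$ with $p(u) = p(v)$ such that $u \prec v$ and $v \prec u$ both hold (or, symmetrically, neither holds). I would then recall that a node is added to the independent set (and hence becomes a candidate for contraction) precisely when it is smaller than all other nodes in its $k$-neighborhood. The crucial consequence is that when $u$ and $v$ are neighbors of equal priority with $u \prec v$ and $v \prec u$ simultaneously, neither $u$ nor $v$ can be certified as locally smallest: from $u$'s viewpoint $v \prec u$ says $v$ is smaller, and from $v$'s viewpoint $u \prec v$ says $u$ is smaller.

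\textbf{Construction.} I would build a small explicit instance: take two adjacent nodes $u, v$ with $p(u) = p(v)$, placed so that each lies in the other's $k$-neighborhood, and choose the tie-breaker so that $u \prec v$ and $v \prec u$ both hold. I would then argue that in any round of the parallel contraction, the independent-set selection rule rejects both $u$ and $v$ (each is blocked by the other appearing ``smaller''), so neither is contracted. If the instance is arranged so that $u$ and $v$ are the only remaining candidates of minimal priority whose contraction is needed to make progress --- for instance a two-node residual graph, or a gadget in which all other nodes depend on $u$ or $v$ being removed first --- then no node is ever contracted, the remaining graph never shrinks, and the preprocessing loops indefinitely. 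This establishes that termination fails for at least one input, which is all the statement claims.

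\textbf{Main obstacle.} The delicate part is ruling out that the implementation ``escapes'' the deadlock through some secondary mechanism: for example, one might worry that a sequential fallback, or a secondary comparison on node IDs, could still pick one of $u, v$. I would address this by arguing purely at the level of the abstract selection rule --- any rule that consults only $p(\cdot)$ and $\prec$ to decide local minimality must, under the inconsistent $\prec$, fail to break the symmetry between $u$ and $v$. I would make explicit that the tie-breaker $\prec$ is by assumption the \emph{only} disambiguation available once priorities are equal (this is exactly the role the earlier definition assigns it), so an inconsistent $\prec$ leaves the comparison genuinely undefined, and the independent-set criterion can never be satisfied for the offending pair. Making this ``no other tie-breaking is available'' assumption precise, and confirming it matches the contraction model described in the related-work section, is where the argument needs the most care.
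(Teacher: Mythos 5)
Your overall strategy---exhibit one graph and one inconsistent tie-breaker for which no node is ever selected into the independent set---is exactly the paper's strategy (the paper uses a triangle $a,b,c$ of equal priority with $x \prec y = 0$ for all pairs, so that no node is locally smallest and the selection step deadlocks). However, the specific construction you commit to has a genuine flaw. You instantiate inconsistency as ``$u \prec v$ and $v \prec u$ both hold'' and claim that then ``neither $u$ nor $v$ can be certified as locally smallest.'' That does not follow from the selection rule as you yourself state it: a node $u$ is added to the independent set precisely when it is \emph{smaller} than every neighbor, i.e.\ (at equal priorities) when $u \prec v$ holds for each neighbor $v$. Nothing in that criterion consults $\neg(v \prec u)$. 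So in your ``both hold'' configuration, $u$ passes the test (because $u \prec v$) and $v$ passes the test (because $v \prec u$): \emph{both} nodes are selected and contracted. What breaks is the independence of the selected set---two adjacent nodes get contracted in the same round, a correctness hazard---but the process still makes progress, so non-termination is not established by this branch of your argument.

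The fix is the other branch of inconsistency, which you mention only parenthetically: take $\prec$ to evaluate to \emph{false} in both directions, as the paper does ($x \prec y = 0$ for all pairs among the three triangle nodes). Then no node of the gadget is smaller than any of its equal-priority neighbors, no node is ever selected, and the contraction loop makes no progress on any iteration---exactly the deadlock you want, and your two-node instance would work just as well as the paper's triangle here. Your closing concern about ``secondary escape mechanisms'' is reasonable but moot at the level of abstraction the lemma is pitched at: the selection rule is defined purely in terms of $p(\cdot)$ and $\prec$, and the paper argues (as you should) entirely within that model. So: keep your framework, but swap your primary construction for the all-false tie-breaker; as written, your main gadget proves a violation of independence, not of termination.
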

It suffices to show that there exists an input graph and an inconsistent tie-breaker for which no node is selected during an iteration.
\begin{proof}
Consider a triangle of three nodes $a,b,c$ each of degree two with equal priority.
Further assume that the tie-breaker is inconsistent with $x\prec y = 0$, $\forall x,y\in \{a,b,c\}$.
No node will be selected to be an element of the independent set that is to be contracted.
Thus, the contraction does not terminate, i.e. is not wait-free for all inputs.
\end{proof}
\subsection{Simple Tie-Breaking}

The easiest implementation of a consistent tie-breaker is a \textit{random shuffle} of the node IDs and a subsequent renumbering of the graph.
A random shuffle of node IDs implies linear work in the number of nodes and edges.
From a theoretical point of view, one could argue that this is as good as it gets since the work is constant per decision.
On the other hand, the constants associated with such a scheme may render it impractical.
Doing a random shuffle and a subsequent renumbering on the nodes of a large graph, e.g. for the entire planet, can be prohibitively expensive in practice.
Preliminary experiments preceding this paper showed that this can take as long as contracting the first 20-25\% of the nodes.
Furthermore, a disadvantage of random shuffling is that it breaks any inherent cache-efficiency that the data has.
In real-world data sets node IDs are given in the order in which they are created, i.e. consecutive numbering of the nodes of an entire street when it is added into the data set.
%One would like to exploit this cache-efficiency during the rather local witness searches of the entire preprocessing.
There are conflicting interests for the numbering of the nodes.
On one hand, the strength of CH is that its data structure is quite different from the intuition of a hierarchy of road types.
And thus, one would want a preprocessing that is independent from any existing ordering or presentation of the input data.
On the other hand, the preprocessing mostly consists of small graph searches and one would like the data to display a certain amount of locality, i.e. close-by nodes have close-by IDs, to leverage cacheing effects.

The simplest data structure that has (PRAM) optimal query time of $O(1)$ to implement the tie-breaking with a \textit{bias array} of size $n$, e.g. in the implementations of \cite{klsv-dtdch-10,v-ptdch-09}.
An array $A$ is populated with numbers $0,\ldots,n-1$ and randomly shuffled at the beginning of the preprocessing.
This yields a precomputed pairwise distinct random number for each node $v$ in the graph.
When a tie-break is necessary for nodes $i$ and $j$ then the values of $A[i]$ and $A[j]$ are compared.
The memory overhead is $\mathcal{O}(1)$ space per element, which is acceptable from a practical as well as from a theoretical point of view.
The main advantage of this rule is its simplicity and that it desirably preserves the locality of any existing numbering.
This is formalized by the following 

\begin{definition}[Independence from Input Numbering]\label{def:tie-breaker-independence}
A tie-breaking ordering is called \textit{ID-independent}, or short \textit{independent}, if its outcome is irrespective of any input numbering. Likewise, an ordering is said to be $(1-\varepsilon)$-independent from the node ordering if the probability that $u\prec v$ is an independent and identically (i.i.d.) random choice is larger or equal to $(1-\varepsilon)$,

$$\pr[(u\prec v)=\text{ i.i.d. random choice}]\geq (1-\varepsilon).$$
\end{definition}

The above straight-forward implementation of tie-breaking has one major disadvantage in practice that is a direct result of its simplicity.
While one expects this tie-breaker to be fast, the number of cache misses is large. 
Contrary to the PRAM model and somewhat following along the lines of the more realistic PEM model, memory accesses are not uniform in reality.
Generally speaking, access to a small local cache is fast while accessing the shared memory is quite expensive.
The bias array is much larger than any cache size even for medium-sized graphs and one must expect an expensive cache miss for each call to the tie-breaking rule, even if the data exhibits some locality preserving node numbering.

A preliminary experiment with a memory debugging tool revealed that most of the accesses to the bias array were actually cache faults.
While literature is generally scarce on the subject, the number of clock cycles wasted in a cache miss easily amount to a few hundred \cite{Drepper2007}.

\subsection{A Fast $(1-\varepsilon)$-Independent Tie-Breaker}

The following hashing-based scheme gives the basis of a tie-breaking mechanism that takes constant time to evaluate and uses constant space only.
It is not only independent with high probability, but surprisingly fast in practice and even faster than the above simple schemes.

To build a tie-breaker for two nodes $a,b\in V$ the hash values of $a$ and $b$ are compared and in the (unlikely) event that they are equal, $a$ and $b$ are compared directly.
More formalized
\begin{definition}[Tie-Breaking by Tabulation Hashing]\label{def:tie-break-order}
Given a (tabulation) hash function $h:U\rightarrow [m]$ and two elements $a,b\in U$, then the boolean expression
$$a\prec b := \left[h(a)<h(b)\right]\vee\left[(p(a)\equiv p(b))\wedge(a<b)\right] $$
obviously defines an order on the elements of $U$.
\end{definition}

\subsection{Analysis of Performance Guarantees}\label{sec:guarantees}

The following analysis gives performance guarantees for the tabulation hash based tie-breaker and leads to showing the following lemma:
\begin{lemma}[Perfomance Guarantees]\label{lemma:tie-break-guarantees}
Tabulation hash based tie-breaking uses sublinear space, evaluates in constant time and is $(1-\varepsilon)$-independent for $\varepsilon > 0$.
Furthermore, the resulting ordering is consistent.
\end{lemma}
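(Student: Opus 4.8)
The plan is to establish the four asserted properties of the order from Definition~\ref{def:tie-break-order} one at a time, treating space, evaluation time, and consistency as the routine parts and concentrating the real work on the $(1-\varepsilon)$-independence claim.

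For \emph{space} and \emph{time} I would argue directly from the structure of the tabulation hash $h(x)=T_1[x_1]\oplus\cdots\oplus T_c[x_c]$. A key from the universe $U$ is split into $c$ characters, so the scheme stores $c$ tables, one per character position; if the number of characters $c$ is treated as a fixed constant, then each table has $|U|^{1/c}$ entries and the total storage is $O(|U|^{1/c})$, which is sublinear in $n=|V|\le|U|$. Evaluating $h$ costs $c$ table lookups and $c-1$ \texttt{XOR}s, and the comparison in Definition~\ref{def:tie-break-order} then costs $O(1)$, so the whole tie-break evaluates in $O(c)=O(1)$ time. I would state once that $c$ is constant, since both bounds hinge on it.

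The core is \emph{$(1-\varepsilon)$-independence}. The guiding observation is that the order consults the raw identifiers $a,b$ — the only place where the input numbering can leak in — exactly in the fall-back branch, which is triggered precisely on a hash collision $h(a)=h(b)$; whenever $h(a)\ne h(b)$ the outcome is decided solely by comparing two random hash values and therefore carries no information about the numbering. So I would (i) show that, conditioned on $h(a)\ne h(b)$, the event $a\prec b$ is a fair, numbering-independent coin, and (ii) bound the collision probability. For (ii), note that for distinct $a\ne b$ there is at least one position $i$ with $a_i\ne b_i$, so in $h(a)\oplus h(b)=\bigoplus_i\!\big(T_i[a_i]\oplus T_i[b_i]\big)$ the agreeing positions cancel and at least one term $T_i[a_i]\oplus T_i[b_i]$ is the \texttt{XOR} of two independent uniform table cells, hence uniform over $[m]$; thus $h(a)\oplus h(b)$ is uniform and $\pr[h(a)=h(b)]=1/m$. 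Choosing the range so that $m\ge 1/\varepsilon$ yields $\pr[(a\prec b)\text{ is an i.i.d.\ random choice}]\ge\pr[h(a)\ne h(b)]=1-1/m\ge 1-\varepsilon$, matching Definition~\ref{def:tie-breaker-independence}; for step (i) I would invoke the pairwise uniformity of tabulation hash values established in~\cite{Patrascu2011} to conclude that, off the collision event, $h(a)<h(b)$ holds with probability $\tfrac12$ independently of the actual identifiers.

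Finally, \emph{consistency} ($a\prec b=\neg(b\prec a)$) I would prove by the trichotomy $h(a)<h(b)$, $h(a)>h(b)$, $h(a)=h(b)$: in the first two cases exactly one hash comparison holds and the fall-back is inert, while in the collision case the order reduces to the total order $a<b$ on distinct identifiers, so in each case exactly one of $a\prec b$, $b\prec a$ is true. I expect the main obstacle to be step (ii) together with its link to Definition~\ref{def:tie-breaker-independence}: pinning down that numbering dependence occurs if and only if a collision occurs, and then turning the elementary \texttt{XOR}-uniformity collision bound into the quantitative $(1-\varepsilon)$ guarantee. The space, time, and consistency claims are comparatively mechanical.
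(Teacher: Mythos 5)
Your proposal is correct and follows the paper's overall decomposition (space and time read off the construction, independence via a collision bound, consistency via case analysis), but the central step is argued by a genuinely different route. Where the paper simply cites Carter and Wegman \cite{cw-u-79} for $3$-independence of tabulation hashing---remarking that $2$-independence would already suffice---and concludes a collision probability ``less than $2^{-k}$'' with $k$ left unspecified, you re-derive the needed $2$-universality from scratch: the cancellation $h(a)\oplus h(b)=\bigoplus_i\left(T_i[a_i]\oplus T_i[b_i]\right)$ with at least one differing character position gives $\pr\left[h(a)=h(b)\right]=1/m$ (in the standard model of i.i.d.\ uniform table entries; note the paper's actual implementation fills tables by a random shuffle, a mismatch the paper's analysis shares), and you make the quantification explicit by choosing $m\geq 1/\varepsilon$. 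This buys a self-contained proof with a precise link to Definition~\ref{def:tie-breaker-independence}, where the paper's citation-based route is shorter but leaves the $\varepsilon$-dependence vague. For your step (i) the appeal to \cite{Patrascu2011} is heavier machinery than needed: exchangeability of the pair $(h(a),h(b))$ already follows from your own setup, since swapping the table entries $T_i[a_i]\leftrightarrow T_i[b_i]$ at the differing positions is a measure-preserving bijection, so $\pr\left[h(a)<h(b)\right]=\pr\left[h(b)<h(a)\right]$ conditioned on no collision. For consistency you argue by trichotomy where the paper argues by contradiction; these are equivalent, but observe that both proofs only go through under your (silent) correction of Definition~\ref{def:tie-break-order}: the clause $p(a)\equiv p(b)$ must be read as $h(a)\equiv h(b)$, as in Listing~\ref{code:branching-tie-break}---taken literally, the definition is inconsistent whenever $h(a)<h(b)$ but $b<a$, since both disjunctions then fire in opposite directions. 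One last nit: $O(|U|^{1/c})$ is sublinear in $|U|$ but not automatically in $n\leq|U|$; for the concrete parameters ($c=2$, $32$-bit IDs, two tables of $2^{16}$ entries) the claim is unproblematic, and the paper itself dismisses this point as following ``from the construction.''
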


\begin{proof}
The first two properties follow from the construction of the data structure.
To show the third property, it is necessary to show that the expected fraction hash collisions is bounded above by $\varepsilon$.
The analysis of the tabulation hash based tie-breaker builds on an earlier result of Carter and Wegmann \cite{cw-u-79} and the definition of $k$-independent hashing:

\begin{definition}[$k$-Independent Hashing]
A family of hash functions $H=\{h:U\rightarrow [m] \}$ is said to be $k$-independent if randomly selecting a function $h\in H$ guarantees for $k$ distinct keys $x_1,\ldots,x_k \in U$ and $k$ hash codes $y_1,\ldots,y_k$ that

$$\pr\limits_{h\in H}\left[h(x_1)=y_1 \wedge\ldots\wedge h(x_k)=y_k\right]\leq m^{-k}.$$
\end{definition}

A property that directly results from this definition is the fact that for fixed keys $x_1\ldots,x_k\in U$ and a randomly drawn hash function $h\in H$, the hash values $h(x_1),\ldots,h(x_k)$ are independent random numbers.
Carter and Wegmann \cite{cw-u-79} show that tabulation hashing is $3$-independent.
Thus, the probability of a hash collision is less than $2^{-k}$ and thus the order it defines is random with high probability.
Only in the rare case, of a collision the ordering is derived from the IDs of the nodes.
Note that plugging the previous result into the above construction directly establishes the $(1-\varepsilon)$-independence.
Actually, the previous result by Carter and Wegmann is even stronger than necessary as a $2$-independent ($1$-universal) hashing scheme would have  sufficed.

To show the last claim of consistency, consider two node IDs $a$ and $b$ for which the ordering is determined.
It suffices to show that $a\prec b = \neg (b\prec a)$.
To the contrary, consider that the tie-breaker evaluates $a\prec b$ and also $b\prec a$ to true.
The tie-breaker either evaluates the hash values $h(a)$ and $h(b)$ or $a$ and $b$ directly if hash values are equal.
In both cases the above contraposition leads to a contradiction.
This concludes the proof of Lemma \ref{lemma:tie-break-guarantees}.
\end{proof}
\subsection{The Actual Implementation}

The implementation splits the 32-bit sized input ID of any node into two words of size 16 bit.
Thus, two lookup tables with $2^{16}$ entries have to be filled with pairwise distinct random numbers.
This is done by filling the tables consecutively with numbers $0,\ldots,2^{16}-1$ and then applying a random shuffle.
The overhead of initializing these arrays is neglectable, since this has to be done only once and the associated work is linear in the  size (and number) of the lookup tables.

A query is straight-forward.
Input ID $x$ is split into the most and least significant halves, a lookup is performed for each of the sub words and then combined by a \texttt{XOR} operation.
Note that the work necessary to perform a query is constant.
See Figure \ref{fig:tie-break} for an illustration of the implementation of this scheme.

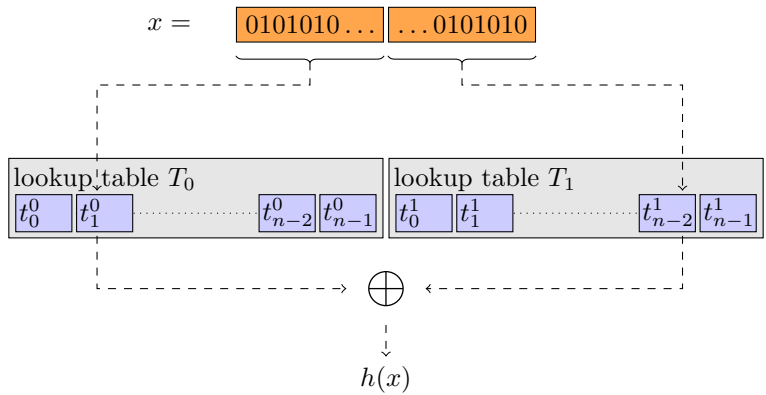
\begin{figure}[htb]
\begin{center}
\scalebox{1.0}{
\begin{tikzpicture}

\tikzstyle{block1} = [draw,fill=blue!20,minimum width=2.1em,minimum height=1.4em]
\tikzstyle{block2} = [draw,fill=orange!70,minimum width=2.8em]
\tikzstyle{pointer}= [->,thick,gray,dashed,step=1pt]

\node[draw,minimum height=3em,minimum width=14em,fill=gray!20,overlay] at (4,-1.8) {};
\node[draw,minimum height=3em,minimum width=14em,fill=gray!20,overlay] at (-1,-1.8) {};

\draw[dotted] (2,-2) -- (6,-2);
\draw[dotted] (-3,-2) -- (1,-2);

\node at (-2.2,-1.55) {lookup table $T_0$};
\node at (2.8,-1.55) {lookup table $T_1$};

\foreach \x/\y in {0/0,1/1,4/n-2,5/n-1} {
   { \node[block1] at (2+\x*0.8,-2) (kasten\x) {}; }
   { \node[block1] at (-3+\x*0.8,-2) (kasten\x) {}; }
   { \node[right] at (1.55+\x*0.8,-2) (text\x) {$t_{\y}^1$}; }
   { \node[right] at (-3.45+\x*0.8,-2) (text\x) {$t_{\y}^0$}; }
}

\node at (-1.35, 0.5) (x) {$x=$};

\node[block2] at (0.5,0.5) (msb) {$0101010\ldots$};
\node[block2] at (2.5,0.5) (lsb) {$\ldots0101010$};

\draw [decoration={brace,mirror,raise=5pt},decorate] (lsb.south west) --  node[above=12pt]{}(lsb.south east);
\draw [decoration={brace,mirror,raise=5pt},decorate] (msb.south west) --  node[above=12pt]{}(msb.south east);

\node[scale=2,overlay] at (1.5,-3) (oplus) {$\oplus$};

\path[draw,dashed,->] (0.5,-0.0) |- (0,-0.3) -| (0.-2.3,-1.7);
\path[draw,dashed,->] (2.5,-0.0) |- (2.5,-0.3) -| (5.4,-1.7);

\path[draw,dashed,->] (-2.3,-2.3) |- (oplus.west);
\path[draw,dashed,->] (5.4,-2.3) |- (oplus.east);

%\path[draw,dashed,->] (0, 0) |- (oplus.west);

\path[draw,dashed,->] (oplus.south) -- (1.5,-3.9);

\node at (1.5,-4.2) {$h(x)$};

\end{tikzpicture}
}
\end{center}
\caption{Sketch of the Tabulation Hash Function.}
\label{fig:tie-break}
\end{figure}

The above hashing data structure can then be combined to the following tie-breaking algorithm by implementing Definition \ref{def:tie-break-order}.
Consider the following code fragment of Listing \ref{code:branching-tie-break}

\begin{lstlisting}[label=code:branching-tie-break,caption=Tabulation-based Tie-Breaker]
bool bias(const NodeID a, const NodeID b) {
    unsigned short hasha = h(a);
    unsigned short hashb = h(b);

    if(hasha != hashb)
        return hasha < hashb;
    return a < b;
}
\end{lstlisting}

The number of expected collisions is tiny as shown in the analysis.
The observed rate of collisions in practice is less than $10^{-5}$.
The entire tie-breaking mechanism, including hashing, uses as few as 22 assembly instructions on an X86 CPU in practice, when letting GCC optimize the code (\texttt{-O3} flag).
See Appendix \ref{apx:assembly} for the actual assembly listing.
Most interestingly, it is possible to evaluate the \texttt{if}-statement without any branching by using conditionally set flags in the register\footnote{X86 assembly instruction \texttt{setg}}.
%On an AMD Opteron 6212 (Bulldozer architecture) the evaluation takes approximately 350 cycles.
The space requirement for this tie-breaking mechanism is 256kb of RAM, which fits into the L2 cache of any recent X86 processor. 
As mentioned before, the literature on processor cache timings is scarce, but L2 cache latency is approximately ten cycles.
Table \ref{tab:tie-breaker-results} gives the results of experiments of running times of CH preprocessing either with bias-array based tie-breaking or tabulation hash-based tie-breaking.

\subsection{Experimental Evaluation}\label{sec:exp-tiebreak}

The experiments to evaluate the practical impact of this tie-breaking scheme have been evaulated on $8$ cores of an AMD Opteron 6212 clocked at 2.6 GHz running Linux kernel version 3.0.0 and 128 GB of RAM.
The processor has $8 \times 16$ KBytes of L1 data caches, $4 \times 2$ MB shared exclusive L2 caches and 16Mbytes of L3 cache memory.
The datastructures and algorithms were implemented in C++ and compiled with GCC 4.6.1 using full optimizations (\texttt{-O3}).
The graph instances represent road networks of various sizes ranging from the metropolitan area of Berlin, Germany to the planet-wide data of the OpenStreetMap\footnote{\url{http://www.openstreetmap.org}} Project as of July, 4th, 2012.

\begin{table}[htb]
\begin{center}
\begin{tabular}{lrr}
 & \multicolumn{2}{c}{Graph Size}\\
\cline{2-3}
 & \multicolumn{1}{c}{$\vert V\vert$} & \multicolumn{1}{c}{$\vert E \vert$} \\ 
Berlin   & 288\,755      &         844\,550 \\ 
Baden    & 5\,108\,952   &     12\,829\,610 \\ 
Germany  & 33\,927\,089  &     86\,477\,642 \\ 
Planet   & 758\,206\,383 & 1\,842\,527\,702 \\ 
\end{tabular} 
\caption{Graph Sizes of several Road Networks used in the experimental evaluation.}
\label{tab:instance-sizes}
\end{center}
\end{table}

The graphs used in the experiments are \textit{edge-expanded}, i.e. each possible turn is explicitly modelled and U-turns are forbidden \cite{v-rprnt-08}.
Moreover, existing turn restrictions present in the input data are preserved.
Note that the CH preprocessing time is higher for edge-expanded networks than for unexpanded graphs as previously observed by Delling \etal \cite{dgpw-crp-11}.
Table \ref{tab:instance-sizes} gives the basic properties of the road networks after edge-expansion.

Table \ref{tab:tie-breaker-results} reports on the impact of the hashing scheme on the duration of the preprocessing.
Columns \textit{bias array} and \textit{xorhash} denote the preprocessing times for a bias-array based tie-breaker and for the tabulation hash based tie-breaker, while column \textit{speedup} denotes the observed speedup.
\textit{saving} indicates the amount of memory saved by the tabulation hashing scheme over the bias array.
\begin{table}[htb]
\begin{center}
\begin{tabular}{lrrrr}
& \multicolumn{2}{c}{duration [s]} & & [GiB]\\
\cline{2-3}\cline{5-5}
        & bias-array & xorhash & speedup & saving \\ 
\hline
Berlin  &      17.30 &      12.06 & 1.43 & $8.9\cdot 10^5$\\ 
Baden   &     105.06 &      80.29 & 1.30 & $2.1\cdot 10^7$\\ 
Germany &     827.48 &     628.96 & 1.31 & $1.3\cdot 10^8$\\ 
Planet  & 21\,873.58 & 15\,815.10 & 1.38 & $3.0\cdot 10^9$
\end{tabular}
\caption{Experimental results for the tabulation-based tie-breaking scheme}
\label{tab:tie-breaker-results}
\end{center}
\end{table}

The experiments show that a tie-breaking mechanism based on tabulation hashing not only reduces the memory requirements, but also that it pays off to trade some processing cycles for much better cache efficiency.
The benefits of tabulation hash based tie-breaking are twofold.
The speedup is consistently between 25--30\% and the space requirement is constant.

An extended profile run was conducted on a smaller edge-expanded graph resembling the street network of Berlin instance from Table \ref{tab:instance-sizes}.
This was done using the \textit{cachegrind} plugin of Valgrind\footnote{\url{http://www.valgrind.org}}, a tool for (memory usage) debugging and profiling.
Examining larger instances is impractical since the tool entirely simulates the cache hierarchy of a modern processor, which takes orders of magnitude longer than running on real hardware.
However, the experiment revealed that while the overall instruction count  increased but slightly, the number of (simulated L1 and LL) cache misses dropped significantly by more than 20\%.
The overall number of executed instruction rose by less than 1\%, which again shows the low computational overhead of tabulation hashing.

\section{Heap Storage using Tabulation Hashing}\label{sec:heap-storage}

The application of tabulation hashing is not limited to implemented of a tie-breaking with guarantees for independent set generation.
The parallel preprocessing needs a priority queue per thread.
The data structure used to implement the priority queue is a binary heap that stores its content in a table.
The standard implementation is an array of size linear in the number of nodes of the graph.
This Section shows that the simplicity and formidable cacheing behavior explored in Section \ref{sec:tie-break} make tabulation hashing a great candidate to construct a hash table from.

As already mentioned in the related work of Section \ref{sec:related} the subgraphs that are explored during witness searches are rather small.
The implementation used for this paper prunes these searches at 1\,000 nodes for simulation and at 2\,000 nodes for actual contractions.

As the analysis of Section \ref{sec:guarantees} shows, the probability of a hash collision is small for tabulation hashing.
Also, the range of the hash function is $2^{16}$ and is of much larger cardinality than the set of at most 2\,000 explored nodes.
Thus, it is worthwile to implement a hash table using tabulation hashing that can be used as a storage table for the priority queue implementation.

It is necessary to use a collision resolution strategy, since a hash function points only to a records location and not to the record itself.
It seems obvious to use linear probing as resolution strategy for two reasons.
First, the number of collisions is small and so is the expected number of cells in the hash tables that have a non-vacant neighbor.
Second, the next cells are very likely to lie in the same cacheline as the original cell and therefore accesses to it are virtually cost-free.

\subsection{The Actual Implementation}
The implementation is mostly straight-forward.
A hash value is generated for each input key and its value is stored at the corresponding hash cell.
Collision, i.e. when the cell is not empty, are resolved by linear probing.
After each witness search the storage table of the priority queue is reinitialized.
While this seems non-obvious at first sight, one has to pay special care for the reinitialization of the storage array.
Resetting an array to initial values is expensive as it either involves a reallocation or a sweep over the memory or even both.
Therefore each cell has a local timestamp that indicates the time when was written last.
Initially, the global timestamp is zero and incremented each time the storage table is cleared.
This way, it is not necessary to actually zero out any memory, and it suffices to do a simple comparison during collision resolution.

The implementation uses 4 bytes each for key and value as well as for the timestamp which yields cell sizes of 12 bytes and therefore an overall memory consumption of 384 kilobytes per queue for the storage table.
Note that the table has only $32\,768=2^{15}$ entries which is half the range of the hash function.
Experiments showed that the collision rate was virtually unaffected while the memory consumption further decreased by 50\%.

\subsection{Experimental Evaluation}\label{sec:exp-storage}

The experiments on the performance of the tabulation hash based heap storage have been evaluated on $8$ cores of an AMD Opteron 6212 clocked at 2.6 GHz running Linux kernel version 3.0.0 and 128 GB of RAM again.
The datastructures and algorithms were implemented in C++ and compiled with GCC 4.6.1 using full optimizations (\texttt{-O3}).
The graph instances resemble the same edge-expanded road networks of various sizes also used in the experiments of Section \ref{sec:tie-break}.

The implementation of CH already includes the tabulation hash based tie breaker from above.
Preprocessing was run for the same instances as before and compared against two \textit{standard} hash table implementations.
The first one is a hash table implementation  from the Boost\footnote{\url{http://boost.org}} C++ library version 1.4.6, namely \texttt{boost::unordered\_map}.
This hash table is said to be close to the implementation of GCC C++0x hash table implementation.
The second implementation is Google's sparsehash\footnote{\url{http://code.google.com/p/sparsehash}} library version 1.10, namely \texttt{google::dense\_hash\_map}.
This hash table has the reputation of being among the fastest hash table implementations.

Table \ref{tab:hash-compare} gives the results from the experiments on a number of input graphs, where \textit{xorbreak} denotes the implementation of Section \ref{sec:tie-break} and \textit{xorhash} denotes the implementation of this Section.
The reference values of the plain bias-array implementation are given in line \textit{bias}.
Note that this variant also uses an array based storage for the priority queue.
Best values are printed in {\bf bold font}.
Note that preprocessing the planets road network did not complete within 18 hours for the Boost and Google hash table implementations.
Column \textit{Bytes Per Core} gives the overhead of the data structures per core that is used during preprocessing.
Reliable overhead values could not be retrieved and therefore left out from the comparison.

\begin{table}[htb]
\begin{center}
\begin{tabular}{lrrrrrrr}
 &  & \multicolumn{4}{c}{duration [s]} & & \multicolumn{1}{c}{Bytes}\\
\cline{3-6}
 & & Berlin & Baden & \multicolumn{1}{c}{Germany} & \multicolumn{1}{c}{Planet} & & per core \\ 
\cline{1-1}\cline{3-6}\cline{8-8}
  boost      &  & 49.51 & 249.29 & 1876.40 & \multicolumn{1}{c}{--} & & \multicolumn{1}{c}{--} \\ 
  google     &  & 26.66 & 165.73 & 1215.67 & \multicolumn{1}{c}{--} & & \multicolumn{1}{c}{--} \\ 
\hdashline
  bias-array &  & 13.74 & 102.09 &  822.21  &   21\,873.58  & & $n\cdot 4$ \\
\hdashline
  xorbreak   &  & {\bf 12.06} & {\bf 80.29} &  {\bf 628.96} & {\bf15\,815.10} & & $n\cdot 4$ \\
  xorhash    &  &      17.30  &     105.06  &        827.48 &     16\,030.90  & & \textbf{384k}
\end{tabular}
\end{center}
\caption{Preprocessing Results for Several Hash Storage}
\label{tab:hash-compare}
\end{table}

Most notably, the performance of the \textit{xorbreak} is the fastest among all experiments.
The gap between \textit{xorhash} and \textit{xorbreak} decreases as the road networks grow in size.
For the planet data set, the relative difference is as small as $2\%$.
An explanation for this is that cache faults occur more often the larger the storage table of the priority queue gets.
The memory consumption of xorhash is only constant per core.
Hence, the number of cache faults occuring in the \textit{xorhash} variant is robust against the size of the graph.

\section{Concluding Remarks and Future Work}\label{sec:conclusion}
We presented an algorithmic tuning parameter between preprocessing efficiency and space requirements.
Speaking in generalized terms, applying tabulation to tie-breaking gives a reasonable speedup in preprocessing efficiency that gives room for further optimization on the space required during preprocessing.

The high performance of the tabulation hashing applications can be attributed to much better cache locality of the intermediate data structures.
This locality has been leveraged during data structure design and implementation.
Carefully chosen and engineered data structures and associated algorithms allow for much flexibility during the preprocessing of large real-world road network instances.
For example, if speed is of essence and memory available then only the tabulation hash based tie-breaking may be applied while in a setting where memory is tight it may be both.

We showed a consequent application of tabulation hashing in CH preprocessing not only gives data structures that have constant size per core, but also preprocessing performance that is en par with previous implementations that use the theoretical best data structures in the PRAM model.

Furthermore, we would like to investigate the application of succinct graph data structures to bring the space requirement of preprocessing closer to the information theoretical lower bound while enjoying competitive preprocessing and query times.

\section*{Acknowledgements}
The authors would like to thank Peter Sanders for bringing up tabulation hashing in the first place and Nodari Sitchinava for great discussions on the topic.

\bibliography{../references,../references-chPreproc}

\appendix
\section{X86 Assembly Code of Tabulation Hashing}\label{apx:assembly}
\begin{lstlisting}[language={[x86masm]Assembler},label=code:-asm-tie-break,caption=X86 Assembly Tie-Breakers Bias function, backgroundcolor=\color{lightgray}]
 movq    table2(%rip), %rdx
 movl    %edi, %eax
 movzwl  %di, %edi
 shrl    $16, %eax
 movzwl  %ax, %eax
 movzbl  (%rdx,%rax), %eax
 movq    table1(%rip), %rdx
 xorb    (%rdx,%rdi), %al
 movzbl  %al, %eax
 ret
\end{lstlisting}

\end{document}